\newcommand{\vp}{\ensuremath{\varphi}}
\newcommand{\ii}{\ensuremath{ \mathrm{i\,} }}
\definecolor{amber(sae/ece)}{rgb}{1.0, 0.49, 0.0}
\newtheorem{lemma}[]{Lemma}
\newcommand{\ket}[1]{\ensuremath{\vert \, #1 \, \rangle}} 
\newcommand{\bra}[1]{\ensuremath{\langle \, #1 \, \vert}} 
\newcommand{\pscal}[2]{\ensuremath{ \langle \, #1 \, \vert  \, #2 \, \rangle}} 
\newcommand{\ppscal}[1]{\pscal{#1}{#1}} 
\newcommand{\dens}[2]{\ensuremath{ \vert \, #1 \, \rangle \langle \, #2 \, \vert}} 
\newcommand{\ddens}[1]{\dens{#1}{#1}} 
\newcommand{\moy}[1]{\ensuremath{ \langle \, #1 \, \rangle}} 
\newcommand{\moyvec}[2]{\ensuremath{ \langle #2 \, | \, #1 \, | \, #2 \rangle}} 
\newcommand{\com}[2]{\ensuremath{ [  #1 , #2  ] }}
\newcommand{\pnorm}[2]{\ensuremath{ \Vert  #1  \Vert_{#2} }} 
\newcommand{\id}{\ensuremath{ \mathcal{I} }}
\newcommand{\idc}{\ensuremath{ \mathrm{Id} }}
\newcommand{\tr}[1]{\ensuremath{  \operatorname{tr}\! \left[ #1 \right]  }}
\newcommand{\Vstate}[2]{\ensuremath{\operatorname{Var}[#1,#2]}}
\newcommand{\spn}[1]{\operatorname{Span} \left\lbrace #1 \right\rbrace }
\newcommand{\qfi}[2]{\mathop{}\! I(#1 \,;#2)}
\newcommand{\cqfi}[2]{\mathop{}\! C(#1 \,;#2)}
\newcommand{\e}[1]{\ensuremath{  \operatorname{e}^{#1}}}
\newcommand{\Vcl}[1]{\operatorname{Var}[#1]}
\newcommand{\est}[1]{\hat{#1}_{\mathrm{est}}  }
\newcommand{\cfi}[2]{\mathop{}\! J(#1 \,;#2)}
\newcommand{\ens}[1]{\ensuremath{ {\left\{ #1 \right\}}} }
\newcommand{\ve}{\ensuremath{\varepsilon}}
\newcommand{\mc}[1]{\mathcal{#1}}
\newcommand{\msc}[1]{\mathscr{#1}}
\newcommand{\diff}[1][]{\mathop{}\!\mathrm{d_{#1}}}
\newcommand{\frd}[3][]{%
  \frac{%
   \diff \ifx\\#1\\\else^{#1}\fi #2
  }{%
    \diff #3 \ifx\\#1\\\else^{#1}\fi
  }%
}
\newcommand{\frp}[3][]{%
  \frac{%
    \partial \ifx\\#1\\\else^{#1}\fi #2
  }{%
    \partial #3 \ifx\\#1\\\else^{#1}\fi
  }%
}
\newcommand{\crr}{}
\def\crr/{Cram\'er-Rao}
\newcommand{\csch}{}
\def\csch/{Cauchy–Schwarz}
\begin{document}

\title{Enhancing sensitivity in quantum metrology by
  Hamiltonian extensions} 

\author{Julien Mathieu Elias Fra\"isse$^{1}$  and Daniel Braun$^{1}$ }
\affiliation{$^1$ Eberhard-Karls-Universit\"at T\"ubingen, Institut
  f\"ur Theoretische Physik, 72076 T\"ubingen, Germany}

\begin{abstract} 
A well-studied  
scenario in quantum parameter estimation theory 
arises when the parameter to be estimated is imprinted on the
  initial state by 
a Hamiltonian
  of the form $\theta G$.  For such ``phase shift
Hamiltonians'' it has been shown that one cannot improve the
channel quantum Fisher information
by adding ancillas and letting the system interact with them.  Here we
investigate the general case, where the  Hamiltonian is not necessarily
a phase shift, and show that in
this case in general
it \emph{is} possible to increase the quantum
channel information and to
reach an upper bound. This can be done 
by adding a term proportional to the derivative of the Hamiltonian, or
by subtracting a term to the original Hamiltonian.
 Neither method
makes use of any ancillas which shows that for 
quantum channel estimation with arbitrary parameter-dependent
Hamiltonian, entanglement with an ancillary 
system is not necessary to reach the best possible
sensitivity.  By adding an operator to the Hamiltonian we can also
modify the time scaling of the channel quantum Fisher information.  We
illustrate   
our techniques 
 with  NV-center magnetometry and the estimation of
the direction of a magnetic field in a given plane using a single
spin-1 as probe.  

\end{abstract}

\maketitle

\section{Introduction}\label{intro}

With the increasing ability of controlling quantum systems, quantum
metrology has become a major and lively field. One important task in
quantum metrology is the parameter estimation of quantum channels,
also known as quantum channel estimation.
It is important for at least two
reasons. Firstly, because to carry out experiments,  it is crucial  to
know exactly the processes applied to the system. Channel
estimation is then used as quantum process tomography. 
Secondly, quantum channel estimation is useful for  
investigating the utility of a system for measuring 
some relevant
physical quantity, and 
for increasing the sensitivity of
sensors.  

The field has attracted a lot of attention when it became clear that
for a given number of probes 
using entanglement between the probes 
can 
increase the sensitivity compared to the classical case \cite{vittorio_giovannetti_quantum_enhanced_2004,giovannetti_quantum_2006}. 
A second possibility of using entanglement  is to introduce an ancillary
system and to  entangle it with the original system, while
still applying the
quantum channel only to the original system. 
It was shown that such channel
extensions can 
sometimes enhance the 
sensitivity
\cite{fujiwara_quantum_2001,dariano_using_2001}, in the sense of
increasing  the Quantum Fisher Information (QFI).  

Among quantum channels, 
unitary channels play a particular role. In these, the time evolution
is obtained through propagation with a unitary operator, obtained  through the 
Hamiltonian  that depends on the parameter to be estimated, 
and quantum channel estimation for unitary channels is therefore also
known as Hamiltonian parameter estimation. 
Phase-shift
Hamiltonians correspond to the special case where the parameter to be estimated 
multiplies a Hermitian generator.
The corresponding
parameter estimation problem is particularly relevant due to the
importance of phase measurements in physics. A typical example of such
a situation is the estimation of a phase in an
interferometer. For unitary channels, we can go beyond 
channel extension by adding  an arbitrary
parameter-independent operator
to the Hamiltonian, which may include 
even interactions with 
ancillary systems that can be initially entangled with the original
system.  Adding a parameter-independent Hamiltonian to the original
Hamiltonian is known as ``Hamiltonian extension''
\cite{fraisse_hamiltonian_2016}.
For  
phase shifts, Hamiltonian extension does not improve the best possible
sensitivity.  
This was shown in
\cite{boixo_generalized_2007,fraisse_hamiltonian_2016} by calculating
an upper bound for the 
QFI for unitary channels and showing that the bound is saturated for
phase shift Hamiltonians.

Here we investigate the
 problem of saturating the upper bound calculated in 
 \cite{boixo_generalized_2007,fraisse_hamiltonian_2016} for general
 Hamiltonians, where it is typically not saturated yet,
by extending the
 Hamiltonian. 
Indeed, from a pure metrological point of view it is 
 interesting to check 
whether or not one can go beyond the optimization over initial state
and POVM measurement
by engineering also the Hamiltonian to increase the 
sensitivity, without introducing additional parameter
dependence. 
In particular, one would like to know whether  ancilla-assisted
schemes can provide an 
advantage for general Hamiltonian parameter estimation. We find
that 
for general parameter dependent Hamiltonians, the
sensitivity can be increased by adding a part to the Hamiltonian
given by its local derivative with respect to the parameter,
multiplied with a large, parameter-independent prefactor.
 This can be
understood intuitively as a ``signal flooding'', \emph{i.e.}~the
relative weight of 
the useful part of the Hamiltonian is enhanced. A second  way to
  saturate the upper bound, is to subtract the Hamiltonian taken at a
  fixed value of the parameter from the original Hamiltonian. At this
  specific value of the parameter the new Hamiltonian saturates the
  upper bound.  Neither scheme needs any ancillas. 

A third opportunity for Hamiltonian extension exists if the
eigenvalues of the Hamiltonian are independent of the parameter.  This
leads to a   periodic time-dependence of the channel QFI.  An
addition to the Hamiltonian that breaks the parameter-independence
of the eigenvalues can create a quadratic time
  scaling in the channel QFI and hence for large enough times
  strongly increase the sensitivity.     
While typically this method does not 
  saturate the upper bound, it is still useful for metrology as it
  allows one to use 
time as a resource.  {The engineering of the time scaling to
  obtain the $t^2$ scaling was also shown in the context of metrology
  with feedback controls
  \cite{yuan_optimal_2015,yuan_sequential_2016}. }

We illustrate our results with NV-center
magnetometry, where we find that optimal sensitivity can be reached by
adding a strong external magnetic field in the direction of the field
to be measured, {\em i.e.}~flooding the signal 
with a strong
known signal of the same kind.  No ancillas are required to reach
optimal sensitivity. We also study  the qualitatively different estimation of a direction of
  a magnetic field with a single spin-1 as a probe 
which  allows us to compare 
the three different methods.

\section{Quantum metrology}

\subsection{Channel QFI}

In parameter estimation theory, 
the goal is to infer the value of a parameter $\theta$ given
the realization of an $m$-sample  of a random variable whose
distribution depends on $\theta$. This is done by an estimator
$\est{\theta}$. A common property required for estimators is to be 
unbiased, meaning that in an infinitesimal interval about the true
value of the parameter $\theta$, on average the estimator should give
that true value of the parameter, $\moy{\est{\theta}}=\theta$. A second
desirable property is to have a variance as small as possible, such
that the estimate fluctuates as little as possible about the true value of the 
parameter. 

Parameter estimation theory applies naturally to the metrology of
quantum systems. 
The most general measurements correspond to
POVMs (Positive Operator Valued Measure), \emph{i.e.}  sets
$\ens{E_\xi}$ of positive semi-definite operators fulfilling the
closure relation $\sum_\xi E_\xi =\id$.   Given a state $\rho_\theta$,
a POVM generates a probability distribution $\mu_\theta(\xi) =
\tr{E_\xi \rho_\theta}$, mapping the problem of estimating a parameter
of a quantum state to the problem of parameter estimation. This
approach leads to a fundamental theorem, known as the quantum \crr/
theorem \cite{helstrom_quantum_1969,braunstein_statistical_1994}, which sets a
bound on the variance of any unbiased estimator 
given a state $\rho_\theta$,
\begin{equation}
\Vcl{\est{\theta}} \geq \frac{1}{\qfi{\rho_\theta}{\theta}}\;,
\end{equation}
with $\qfi{\rho_\theta}{\theta}$ the Quantum Fisher Information (QFI),
defined as $\qfi{\rho_\theta}{\theta} = \tr{\rho_\theta
  L_\theta^2}$. The so-called symmetric logarithmic derivative
$L_\theta$ is defined implicitly by $\frp{\rho_\theta}{\theta}
=(\rho_\theta L_\theta+L_\theta \rho_\theta)/2$.  The 
bound results from 
a double optimization: a first optimization
over all unbiased estimators, and then a second optimization over all
possible POVMs. When repeating the measurement $m$ times independently
---  which amounts to generating an $m$-sample --- the quantum \crr/
theorem reads $\Vcl{\est{\theta}} \geq
(m\qfi{\rho_\theta}{\theta})^{-1}$. The bound can always be saturated
in the limit of large numbers $m$ of measurements by using the maximum
likelihood estimator and building the POVM based on the eigenvectors
of $L_\theta$.

Often the parameter to be estimated characterizes 
a physical process, \emph{i.e.}~a quantum 
channel: 
One starts with an initial state $\rho_0$
of the probe that is independent of the parameter. 
Then we let the quantum channel $\mc{E}_\theta$ act on the probe,
giving as a result the state $\rho_\theta = \mc{E}_\theta (\rho_0)$.  
A POVM measurement is then performed, whose outcome is fed into
$\hat{\theta}_\text{est}$ and provides us with an estimate of $\theta$.
We see that in this case we have a new degree of freedom in the QFI,
the choice of the initial state.  One thus introduces a new quantity,
the channel QFI $C$, which corresponds to the largest QFI
reachable for a given channel, 
\begin{equation}
\cqfi{\mc{E}_\theta}{\theta}=\max_{\rho_0}  \qfi{\mc{E}_\theta(\rho_0)}{\theta}\;.
\end{equation}
Importantly, due to the convexity of the QFI, it is enough to maximize
over pure states, $\rho_0=\ddens{\psi_0}$. 

\subsection{Channel extensions and Hamiltonian extensions}
We now turn our attention to extensions for metrology.
It has been shown that by adding an ancilla
to the probe but still acting only with the channel on the probe,
\emph{i.e.}~applying $\mc{E}\otimes \idc$ to the whole system, one
can, for certain channels, improve the channel QFI
\cite{dariano_using_2001,fujiwara_quantum_2001}. We call such 
extensions ``channel extensions'', in contrast to ``Hamiltonian
extensions'', defined below. For 
unitary channels, channel extensions 
 do not increase the
channel QFI: $\cqfi{\mc{U}_{{H(\theta)}} \otimes
  \idc}{\theta}=\cqfi{\mc{U}_{{H(\theta)}}}{\theta}$  (see
section \ref{sec.cqfi} for a short proof). 

But when extending the Hamiltonian $H(\theta)$
by adding to it another parameter-independent 
Hamiltonian $H_1$, 
 \begin{equation}\label{eq:def_ext_ph_shift}
H_{\mathrm{ext}}(\theta)= 
H(\theta)\otimes \id + H_{1}\;, 
\end{equation}
we are not anymore in the situation covered by channel
extension.
Eq.\eqref{eq:def_ext_ph_shift} is our formal definition of a ``Hamiltonian
extension''. $H_\text{1}$ can be a coupling between the original 
system and an ancilla, but can also refer to a Hamiltonian that acts
non-trivially only on the Hilbert space of the original system or on the
Hilbert space of the ancilla. 
Can such a Hamiltonian
extension lead to an improvement in the channel QFI? It was shown
in  \cite{fraisse_hamiltonian_2016} that for the specific case of
phase shift Hamiltonians, \emph{i.e.} Hamiltonians of the  form
$H(\theta) = \theta G$,  such Hamiltonian extensions cannot improve the
channel QFI, a 
result that was
already known in the context of
many-body interaction metrology 
\cite{boixo_generalized_2007}.  
Here we investigate the question more generally for arbitrary Hamiltonians
$H(\theta)$. 

\section{Metrology with unitary channels}

\subsection{Channel QFI {and semi-norm} }\label{sec.cqfi}

From now and for the rest of the article we will focus on unitary
channels. These channels correspond to the unitary evolution of the
state of a closed
system
described by the Schr\"odinger equation. Given the Hamiltonian
$H(\theta)$, the effect of the channel $\mc{U}_{H(\theta)}$ on the
initial state $\rho_0$ is given by
$\mc{U}_{H(\theta)}(\rho_0)=U_{H(\theta)}
\rho_0{U_{H(\theta)}}^\dagger $ with $U_{H(\theta)}=\e{-\ii t
  H(\theta)}$ the evolution operator (we take $\hbar=1$  throughout
the paper apart from the section \ref{sec:NV}).  Since we only deal
with unitary channels we adopt  the notation
$\cqfi{\mc{U}_{H(\theta)}}{\theta} \equiv \cqfi{H(\theta)}{\theta} $
for the channel QFI. 
In general, calculating the QFI is a difficult task as it requires one
to diagonalize the density matrix. For pure states the expression is
still simple and reads,
$\qfi{\ddens{\psi_\theta}}{\theta}=\ppscal{\dot{\psi}_\theta}-\vert
\pscal{\psi_\theta}{\dot{\psi}_\theta} \vert^2$, where here and
throughout the article the dot stands for the derivative with respect
to the parameter to be estimated, {\em
  i.e.}~$\ket{\dot{\psi}_\theta}=\frp{\ket{\psi_\theta}}{\theta}$. Introducing
the local generator 
 \begin{equation}
   \msc{H} =  \ii {U_{H(\theta)}}^\dagger \dot{U}_{H(\theta)}\;,
\end{equation}
we can write the QFI as
\begin{equation}\label{eq:qfi_gen_ham_3}
\qfi{\ddens{\psi_\theta}}{\theta} =4\Vstate{\msc{H}}{\ket{\psi_0}}\;,    
\end{equation}
with $ \Vstate{\msc{H}}{\ket{\psi_0}} \equiv
\moyvec{\msc{H}^2}{\psi_0} - \moyvec{\msc{H}}{\psi_0}^2$, 
and consequently the channel QFI  as
\begin{equation}
\cqfi{H(\theta)}{\theta}= 4 \max_{\ket{\psi_0}\in\mc{H}} \Vstate{\mathscr{H}}{\ket{\psi_0}}\;.
\end{equation}
The maximization {\cite{giovannetti_quantum_2006}} 
 can be done as follows: Using Popoviciu's inequality
\cite{Popoviciu35}, which states that for a random variable $X$, with
minimal value $a$ and maximal value $b$, the variance of $X$ is upper
bounded by $(b-a)^2/4$, and then noticing that in
eq.\eqref{eq:qfi_gen_ham_3} the  variance saturates its upper bound
for states of the form $(\ket{h_{\mathrm{M}}}+\e{\ii \vp}
\ket{h_{\mathrm{m}}})/\sqrt{2}$ 
(where $\ket{h_{\mathrm{M}}}$ and $\ket{h_{\mathrm{m}}}$ correspond,
respectively, to  
eigenvectors  of $\msc{H}$ with maximal eigenvalue ($h_{\mathrm{M}}$) and minimal eigenvalue ($h_{\mathrm{m}}$)), 
 we have 
\begin{equation}
\cqfi{H(\theta)}{\theta}= (h_{\mathrm{M}}-h_{\mathrm{m}})^2\;.
\end{equation}

In order to simplify the calculation of the channel QFI, and following the method in \cite{boixo_generalized_2007}, we introduce the semi-norm
\begin{equation}
\pnorm{A}{sn} = a_{\mathrm{M}}-a_{\mathrm{m}}
\end{equation}
with $ a_{\mathrm{M}}$ and $ a_{\mathrm{m}}$ the maximal and minimal
eigenvalues of $A$ { (we call such eigenvalues \emph{extremal}
  eigenvalues, and their associated eigenvectors, \emph{extremal}
  eigenvectors ; we also call maximal (resp.~minimal) eigenvector an
    eigenvector corresponding to a maximal (resp.~minimal)
    eigenvalue). We then have the simple expression for the channel
  QFI 
\begin{equation}\label{eq:cqfi_sn}
\cqfi{H(\theta)}{\theta}= \pnorm{\msc{H}}{sn}^2\;.
\end{equation}

As it will be important in the following, let us show the triangle 
inequality for this semi-norm. Let $A=B+C$. Then
$\pnorm{A}{sn}=\moyvec{A}{a_{\mathrm{M}}} - 
\moyvec{A}{a_{\mathrm{m}}}=\moyvec{B}{a_{\mathrm{M}}}+\moyvec{C}{a_{\mathrm{M}}}
- \moyvec{B}{a_{\mathrm{m}}} - \moyvec{C}{a_{\mathrm{m}}}$, with $\ket{a_{\mathrm{M}}}$ (resp.~$\ket{a_{\mathrm{m}}}$)  a maximal (resp.~minimal) eigenvector {of $A$}.
By definition $\moyvec{B}{a_{\mathrm{M}}} \leq b_{\mathrm{M}}$ and
$-\moyvec{B}{a_{\mathrm{m}}} \leq -b_{\mathrm{m}}$ with $
b_{\mathrm{M}}$ and $ b_{\mathrm{m}}$ the maximal and minimal
eigenvalues of $B$. In the same way we have
$\moyvec{C}{a_{\mathrm{M}}} \leq c_{\mathrm{M}}$ and
$-\moyvec{C}{a_{\mathrm{m}}} \leq -c_{\mathrm{m}}$ with $
c_{\mathrm{M}}$ and $ c_{\mathrm{m}}$ the maximal and minimal
eigenvalues of $C$.  We thus get $\pnorm{A}{sn}=\pnorm{B+C}{sn} \leq
b_{\mathrm{M}}+c_{\mathrm{M}}-b_{\mathrm{m}}-c_{\mathrm{m}} =
\pnorm{B}{sn}+\pnorm{C}{sn}$ which is exactly the triangle
inequality. {When $B$ and $C$ have no degenerate extremal eigenvalues } the equality is reached for $\ket{b_{\mathrm{M}}} \propto
\ket{c_{\mathrm{M}}}$ and $\ket{b_{\mathrm{m}}} \propto
\ket{c_{\mathrm{m}}}$, meaning that both operators have to share the
same extremal eigenvectors.  For the degenerate case, the triangle
inequality is saturated 
if and only if the intersection of the invariant
subspaces of the maximal (resp.~minimal) eigenvalue of $B$ and of $C$
is not empty \footnote{Say $A$ is an operator acting on $E$. Then a
  subspace $M\subset E$ is an invariant subspace of $A$ if and only if
  $AM\subset M$. We also say that $M$ is stable by $A$}.  Stated
otherwise, $B$ and $C$ should share (in the sense of proportionality)
at least one {maximal  and one minimal eigenvector}. 
Since the eigenvalues 
are preserved by
similarity transformations, we also have $\pnorm{U A
  U^{-1}}{sn}=\pnorm{A}{sn}$ for any unitary $U$, regardless of whether
$U$ depends on $\theta$ or not. 

We can use eq.\eqref{eq:cqfi_sn} to show that channel extension does
  not increase the channel QFI of unitary channels: For a unitary
  channel $\mc{U}_{H(\theta)}$ the extended channel
  $\mc{U}_{H(\theta)}\otimes \idc$ can be written as
  $\mc{U}_{H(\theta)\otimes \id}$. We furthermore have
  ${U_{H(\theta)\otimes \id}}^\dagger \dot{U}_{H(\theta)\otimes
    \id}={U_{H(\theta)}}^\dagger \dot{U}_{H(\theta)} \otimes \id$
  showing that $\cqfi{\mc{U}_{H(\theta)}\otimes \idc}{\theta}=
  \pnorm{\msc{H}\otimes
    \id}{sn}=\pnorm{\msc{H}}{sn}=\cqfi{\mc{U}_{H(\theta)}}{\theta}$,
  using eq.\eqref{eq:cqfi_sn} plus the fact that $\msc{H}\otimes \id$
  has the same eigenvalues as $\msc{H}$.

\subsection{Upper bound for the channel QFI} \label{sec:upper_bound}

\begin{lemma}[Upper bound for channel QFI \cite{boixo_generalized_2007}]\label{Boixo}
For general Hamiltonians $H(\theta)$,  with associated evolution operator $U_{H(\theta)}=\e{-\ii t H(\theta) }$, the channel QFI  is upper bounded {as}
\begin{equation}\label{eq:main_ineq}
\cqfi{ H(\theta) }{ \theta }  \leq t^2 \pnorm{\dot{H}(\theta)}{sn}^2\;.
\end{equation}
 \end{lemma}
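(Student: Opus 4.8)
The plan is to work directly from the expression $\cqfi{H(\theta)}{\theta}=\pnorm{\msc{H}}{sn}^2$ with $\msc{H}=\ii {U_{H(\theta)}}^\dagger \dot U_{H(\theta)}$, established in eq.~\eqref{eq:cqfi_sn}, so that the whole problem reduces to bounding the single quantity $\pnorm{\msc{H}}{sn}$. The difficulty is that $\dot U_{H(\theta)}$ is \emph{not} simply $-\ii t\,\dot H\, U_{H(\theta)}$, because $H(\theta)$ and $\dot H(\theta)$ need not commute. The heart of the argument is therefore to produce a clean integral representation of $\msc{H}$ in which $\dot H$ enters only through unitary conjugations, since the semi-norm is invariant under those.

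First I would differentiate the exponential $U_{H(\theta)}=\e{-\ii t H(\theta)}$ using Duhamel's formula, $\frac{d}{d\theta}\e{A(\theta)}=\int_0^1 \e{sA}\,\dot A\,\e{(1-s)A}\,ds$, applied with $A(\theta)=-\ii t H(\theta)$ and $\dot A=-\ii t\dot H$. Multiplying on the left by ${U_{H(\theta)}}^\dagger=\e{\ii t H}$ collapses the leftmost factor, $\e{\ii tH}\e{-\ii s tH}=\e{\ii(1-s)tH}$, and after the substitution $u=(1-s)t$ one obtains the exact representation
\begin{equation}
\msc{H}=\int_0^t \e{\ii u H(\theta)}\,\dot H(\theta)\,\e{-\ii u H(\theta)}\,du\;.
\end{equation}
This is the key step and, I expect, the main technical obstacle: it demands care with the noncommutativity and with the change of variables, but it uses no inequalities, so the representation is an identity rather than a bound.

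Once this form is in hand the estimate is immediate from the two properties of the semi-norm already proved. Extending the triangle inequality from finite sums to the integral, by approximating with Riemann sums and passing to the limit, gives $\pnorm{\msc{H}}{sn}\le \int_0^t \pnorm{\e{\ii u H}\,\dot H\,\e{-\ii u H}}{sn}\,du$. Each integrand is a unitary similarity transform of $\dot H(\theta)$, with $\e{-\ii u H}=(\e{\ii u H})^{-1}$, so by the invariance $\pnorm{U A U^{-1}}{sn}=\pnorm{A}{sn}$ the integrand equals the constant $\pnorm{\dot H(\theta)}{sn}$. Integrating over $[0,t]$ then yields $\pnorm{\msc{H}}{sn}\le t\,\pnorm{\dot H(\theta)}{sn}$, and squaring delivers eq.~\eqref{eq:main_ineq}. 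The only point requiring a little care is justifying the integral form of the triangle inequality, which follows from the continuity of the extremal eigenvalues under the Riemann-sum limit.
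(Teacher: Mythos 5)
Your proof is correct and follows essentially the same route as the paper: both use the derivative-of-the-exponential (Duhamel) formula to obtain an integral representation of $\msc{H}$ in which $\dot H(\theta)$ appears only under unitary conjugation, then apply the triangle inequality and the unitary invariance of the semi-norm. Your representation $\msc{H}=\int_0^t \e{\ii u H}\,\dot H\,\e{-\ii u H}\,\diff u$ is just the paper's $\msc{H}=t\int_{-1}^0 W(\alpha,\theta)\,\diff\alpha$ after the change of variables $u=-\alpha t$.
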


\begin{lemma}[Saturation of the bound]\label{sat}
In the case where $\dot{H}(\theta)$ has no degenerate extremal
eigenvalues, 
equality in \eqref{eq:main_ineq} is reached if and
only if the extremal eigenvectors of $\dot{H}(\theta)$ are also
eigenvectors  of $H(\theta)$.  

In the degenerate case, 
equality in \eqref{eq:main_ineq} is reached if and
only if  there exist $\ket{\psi} \in \mc{P}$ and $\ket{\phi} \in \mc{D}$ such that 
$V(\alpha) \ket{\psi}  \in \mc{P}$ and $V(\alpha) \ket{\phi}  \in
\mc{D}$ for all $\alpha \in [0,1]$ with
$\mc{P}=\spn{\ket{1},\cdots,\ket{a}}$
(resp.~$\mc{D}=\spn{\ket{b},\cdots,\ket{d}}$) 
the invariant subspace of $\dot{H}(\theta)$  associated with its
maximal 
(resp.~minimal) eigenvalue, and where $V(\alpha)=\e{-\ii \alpha t
  H(\theta)}$. A sufficient condition is that there exists an
eigenvector of $H(\theta)$ 
 in $\mc{P}$ and another one in $\mc{D}$. 
 
The inequality is saturated by
   $\ket{\psi_\mathrm{opt}} =(\ket{M}+\ket{m})/\sqrt{2}$ with $\ket{M}
   \in \mc{P}$ and $\ket{m} \in \mc{D}$, {\em i.e.}~a balanced
   superposition of a maximal eigenvector and a minimal eigenvector of
   $\dot{H}(\theta)$.
\end{lemma}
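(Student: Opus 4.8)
The plan is to reduce the statement to the triangle inequality for $\pnorm{\cdot}{sn}$ established in Section~\ref{sec.cqfi}, now applied to a continuous family of operators. First I would turn the local generator into an integral: differentiating $U_{H(\theta)}=\e{-\ii t H(\theta)}$ with the standard formula for the derivative of a matrix exponential and inserting the result into the definition $\msc{H}=\ii {U_{H(\theta)}}^\dagger \dot{U}_{H(\theta)}$ gives
\begin{equation}
\msc{H} = t\int_0^1 V(\alpha)^\dagger\,\dot H(\theta)\,V(\alpha)\, d\alpha ,
\end{equation}
with $V(\alpha)=\e{-\ii\alpha t H(\theta)}$ as in the statement. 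Each integrand $A(\alpha):=V(\alpha)^\dagger\dot H(\theta)V(\alpha)$ is a unitary conjugate of $\dot H(\theta)$, so it has the same spectrum and in particular $\pnorm{A(\alpha)}{sn}=\pnorm{\dot H(\theta)}{sn}$ for all $\alpha$. Applying the semi-norm triangle inequality in its integral form then reproduces Lemma~\ref{Boixo}, and makes clear that \eqref{eq:main_ineq} is saturated exactly when the triangle inequality is saturated simultaneously by the whole family $\{A(\alpha)\}_{\alpha\in[0,1]}$.

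The core of the argument is to carry the two-operator saturation criterion over to this continuum. I would work with the Rayleigh quotient: writing $h_{\mathrm M},h_{\mathrm m}$ for the extremal eigenvalues of $\dot H(\theta)$, any unit vector obeys $\moyvec{A(\alpha)}{w}\le h_{\mathrm M}$, hence $\moyvec{\msc H/t}{w}=\int_0^1\moyvec{A(\alpha)}{w}\,d\alpha\le h_{\mathrm M}$, and the maximal eigenvalue of $\msc H/t$ reaches $h_{\mathrm M}$ iff some $\ket w$ attains equality. Since $\moyvec{A(\alpha)}{w}$ is continuous in $\alpha$ and bounded above by $h_{\mathrm M}$, equality of the integral forces $\moyvec{A(\alpha)}{w}=h_{\mathrm M}$ for \emph{every} $\alpha$, i.e.~$\ket w$ is a maximal eigenvector of every $A(\alpha)$ at once. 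The maximal eigenspace of $A(\alpha)$ is $V(\alpha)^\dagger\mc P$, so this is exactly the demand that $\ket\psi:=\ket w$ (which lies in $\mc P$ because $V(0)=\id$) satisfies $V(\alpha)\ket\psi\in\mc P$ for all $\alpha$; the symmetric analysis at the minimal eigenvalue produces $\ket\phi\in\mc D$ with $V(\alpha)\ket\phi\in\mc D$. This establishes the degenerate necessary-and-sufficient condition.

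The remaining claims are specializations. In the non-degenerate case the maximal eigenspace of each $A(\alpha)$ is the single ray $V(\alpha)^\dagger\ket{h_{\mathrm M}}$, so the common-eigenvector condition collapses to $V(\alpha)^\dagger\ket{h_{\mathrm M}}\propto\ket{h_{\mathrm M}}$ for all $\alpha$, which holds iff $\ket{h_{\mathrm M}}$ is an eigenvector of $H(\theta)$, and likewise for $\ket{h_{\mathrm m}}$. The sufficient condition is immediate: an eigenvector $\ket\psi\in\mc P$ of $H(\theta)$ is only rephased by $V(\alpha)$ and hence stays in $\mc P$. Finally, whenever the condition holds one checks $A(\alpha)\ket\psi=h_{\mathrm M}\ket\psi$ for every $\alpha$ (because $V(\alpha)\ket\psi\in\mc P$ is a maximal eigenvector of $\dot H(\theta)$), so $\msc H\ket\psi=t\,h_{\mathrm M}\ket\psi$ and, symmetrically, $\msc H\ket\phi=t\,h_{\mathrm m}\ket\phi$; together with the upper bound these force $\ket\psi,\ket\phi$ to be extremal eigenvectors of $\msc H$, and the balanced superposition $\ket{\psi_{\mathrm{opt}}}=(\ket M+\ket m)/\sqrt2$, with $\ket M=\ket\psi$ and $\ket m=\ket\phi$, maximizes the variance in \eqref{eq:qfi_gen_ham_3}, attaining $t^2\pnorm{\dot H(\theta)}{sn}^2$.

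I expect the degenerate case to be the real obstacle. Promoting the two-operator saturation criterion to the continuum is not purely formal: one must show the shared extremal eigenvector survives for every $\alpha\in[0,1]$ rather than merely on a dense set, which is precisely what the continuity-plus-boundedness Rayleigh argument secures, and one must recast the abstract ``nonempty intersection of the extremal invariant subspaces'' into the workable geometric form $V(\alpha)\ket\psi\in\mc P$. Once those two points are settled, the non-degenerate statement, the sufficient condition, and the explicit optimal state all follow by direct substitution.
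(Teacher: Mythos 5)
Your proof is correct, and its overall skeleton matches the paper's: the integral representation $\msc{H}=t\int_0^1 V(\alpha)^\dagger \dot H(\theta)V(\alpha)\,\diff\alpha$ from the derivative-of-exponential formula, unitary invariance of $\pnorm{\cdot}{sn}$, reduction of saturation to a common extremal eigenvector across the whole family, translation into the invariance condition $V(\alpha)\ket\psi\in\mc P$, and the same specializations. Where you genuinely differ is in how the central step is justified. The paper simply \emph{asserts} that equality in the integrated triangle inequality requires a simultaneous extremal eigenvector of $W(\alpha,\theta)$ for all $\alpha$, implicitly extending the two-operator criterion of Sec.~III.A to a continuum; your Rayleigh-quotient argument --- $\moyvec{A(\alpha)}{w}\le h_{\mathrm M}$ pointwise, continuity in $\alpha$, so equality of the integral forces equality for every $\alpha$ --- actually proves this, and is the cleanest way to rule out saturation failing only on a null set. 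This makes your version more self-contained at precisely the point where the paper is loosest. You also make explicit (as the paper does not) that the optimal state must be built from the specific invariant vectors $\ket\psi,\ket\phi$ rather than arbitrary elements of $\mc P,\mc D$, by verifying $\msc H\ket\psi=t h_{\mathrm M}\ket\psi$ directly. Conversely, in the non-degenerate case you assert without proof that $V(\alpha)^\dagger\ket{h_{\mathrm M}}\propto\ket{h_{\mathrm M}}$ for all $\alpha\in[0,1]$ holds iff $\ket{h_{\mathrm M}}$ is an eigenvector of $H(\theta)$; the paper devotes an argument to this (eigenvectors of $V(\alpha)$ that are not eigenvectors of $H(\theta)$ can occur only for countably many $\alpha$, from phase coincidences $\e{-\ii\alpha t h_i}=\e{-\ii\alpha t h_j}$ with $h_i\neq h_j$). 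This is a minor omission, not a gap: it is patched in one line by writing $V(\alpha)\ket{h_{\mathrm M}}=c(\alpha)\ket{h_{\mathrm M}}$ and differentiating at $\alpha=0$, which yields $H(\theta)\ket{h_{\mathrm M}}\propto\ket{h_{\mathrm M}}$.
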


\begin{proof}[Proof of Lemma \ref{Boixo}\cite{boixo_generalized_2007}] 
In general \cite{wilcox_exponential_1967,snider_perturbation_1964}, for a matrix $M(x)$ depending on the parameter $x$, the derivative of its exponential with respect to $x$ is given by 
\begin{equation}\label{eq:der_exp}
\frd{\e{M(x)}}{x} =\int_{0}^1 \e{\alpha M(x)}\frd{M(x)}{x}\e{(1-\alpha) M(x)}\diff \alpha\;.
\end{equation}
From this expression we can re-express the local generator of the translation,
\begin{equation}
\msc{H} = t \int_{-1}^0 W(\alpha,\theta) \diff \alpha\;,
\end{equation}
with $W(\alpha,\theta) =V(\alpha)\dot{H}(\theta) V(\alpha)^\dagger$,
where $V(\alpha)=\e{-\ii \alpha t H(\theta)}$.
Applying the triangle inequality to the semi-norm of $\msc{H}$, and noticing that $\pnorm{W(\alpha,\theta)}{sn}=\pnorm{\dot{H}(\theta)}{sn}$ we obtain 
\begin{equation} \label{eq:first_ineq}
\pnorm{\msc{H}}{sn} \leq  t \int_{-1}^0 \pnorm{ W(\alpha,\theta) }{sn}\diff \alpha = t\pnorm{\dot{H}(\theta)}{sn}\;.
\end{equation}
\end{proof}

\begin{proof}[Proof of Lemma \ref{sat}]
 Since $H(\theta)$ is Hermitian, so is
$\dot{H}(\theta)$ and we can then write it in its orthonormal eigenbasis 
as $\dot{H}(\theta) =\sum_{i=1}^d e_i \ddens{i}$, with $e_1=\cdots=e_a > e_i > e_{b}=
\cdots =e_d 
$ for all $a<i<b$, where $a$ is the dimension of the
invariant subspace of maximal eigenvalue and $d-b+1$ is the dimension
of the invariant subspace with minimal eigenvalue.
Since $W(\alpha,\theta)$ is related to
$\dot{H}(\theta)$ by a similarity transformation, we can write it as
$W(\alpha,\theta) = \sum_{i=1}^d e_i 
\ddens{\alpha_i}$ with $\ket{\alpha_i} = V(\alpha) \ket{i}$. The
condition for equality in eq.\eqref{eq:first_ineq} and hence in
  eq.\eqref{eq:main_ineq} is that there exists a vector $\ket{\psi}$
which is an eigenvector of $W(\alpha,\theta)$  with eigenvalue $e_1$
simultaneously for all values of $\alpha \in [-1,0]$, and a vector
$\ket{\phi}$ which is an eigenvector of $W(\alpha,\theta)$ with
eigenvalue $e_d$ simultaneously for all values of $\alpha \in
[-1,0]$. 

Consider first the case where
$\dot{H}(\theta)$ has no degenerate extremal eigenvalues,
\emph{i.e.}~$a=1$ and $b=d$.  
 Then the condition for equality 
is equivalent to 
$\ket{1}$
and $\ket{d}$ being 
eigenvectors of $V(\alpha)$ for all $ \alpha \in [-1,0]$.  
Let us see how we can re-express this condition so that it
only involves {eigenvectors} of 
 $\dot{H}(\theta)$ and $H(\theta)$. By expressing the
Hamiltonian in its eigenbasis, $H(\theta) = \sum_i h_i \ddens{h_i}$,
we can write $V(\alpha)=\e{-\ii \alpha t h_i}\ddens{h_i}$. 
The eigenvectors of $H(\theta)$ are also eigenvectors of $V(\alpha)$ but
not the other way round: Indeed, $\e{-\ii \alpha t h_i}$ may be equal to  $\e{-\ii \alpha t h_j}$
while $h_i \neq h_j$. In such cases we can construct eigenvectors of
$V(\alpha)$ which are not eigenvectors of $H(\theta)$ by linearly
combining $\ket{h_i}$ and $\ket{h_j}$. Nevertheless this  
can happen only for a countable number of $\alpha$-values 
 given $t$, whereas in all other
cases the eigenvectors of $V(\alpha)$ must also be eigenvectors of
$H(\theta)$. Hence, the condition $\e{-\ii \alpha t h_i}=\e{-\ii \alpha t h_j}$ cannot be satisfied for all
$ \alpha \in [-1,0]$ if $h_i \neq h_j$,
and therefore the condition for equality in \eqref{eq:first_ineq} can
be stated as: $\ket{1}$ and $\ket{d}$ should be eigenvectors of
$H(\theta)$.

In the degenerate case  the 
eigenspace of $W(\alpha,\theta)$ with
eigenvalue $e_1$ is spanned by $\lbrace
V(\alpha)\ket{1},\cdots,V(\alpha)\ket{a} \rbrace$. The existence of a
common eigenvector of $W(\alpha,\theta)$ for $\alpha \in [-1,0]$ with
eigenvalue $e_1$  means  that there should exist a vector
$\ket{\psi}$ independent of $\alpha$ that can be written as
  $\sum_{i=1}^a \psi_i(\alpha) V(\alpha) \ket{i} =V(\alpha)
  \sum_{i=1}^a \psi_i(\alpha) \ket{i} =V(\alpha)
  \ket{\varphi(\alpha)}$ where the last equality defines
$\ket{\varphi(\alpha)}$. Since $\ket{\varphi(\alpha)} \in
  \mc{P}$ this is equivalent to say that there  should exist a vector
$\ket{\psi}$ such that $ V(-\alpha)\ket{\psi}$ belongs to $\mc{P}$ for
$\alpha \in[-1,0]$. This is true especially for $\alpha=0$ showing
that also $\ket{\psi}$ belongs 
to $\mc{P}$. A similar treatment
for the lowest eigenvalue shows that the necessary and sufficient
condition for equality in eq.\eqref{eq:main_ineq} 
is equivalent to
the existence of a vector $\ket{\psi} \in 
\mc{P}$ such that $V(\alpha)\ket{\psi} \in \mc{P}$ for   $\alpha
\in[0,1]$, and of a vector $\ket{\phi}\in \mc{D}$ such that
$V(\alpha)\ket{\phi} \in \mc{D}$ for $\alpha \in[0,1]$. 
  The sufficient condition is found by observing that the above condition is fulfilled if $H(\theta)$ has an eigenvector in $\mc{P}$ and an eigenvector in $\mc{D}$.  
\end{proof}

In the case of phase shifts, the condition for equality in
eq.\eqref{eq:main_ineq} 
is fulfilled as $\dot{H}(\theta)$ and  $H(\theta)$ are simultaneously diagonalizable, showing that for phase
shifts $\cqfi{\theta G}{\theta}=\pnorm{G}{sn}$. The Hamiltonian
extension $G_{\mathrm{ext}}(\theta)$ of a phase shift
is not a phase shift anymore.  We thus have
$\cqfi{G_{\mathrm{ext}}(\theta)}{\theta} \leq
\pnorm{\dot{G}_{\mathrm{ext}}(\theta)}{sn}=\pnorm{G \otimes
  \id}{sn}=\pnorm{G }{sn} = \cqfi{\theta G}{\theta}$, showing that
quantum metrology with phase shift Hamiltonians cannot profit from
Hamiltonian extension, in addition to not being able to profit from channel extensions.

\section{Saturating the bound} \label{sec:saturating_bound}

We have seen that  Hamiltonian extension fails to provide an advantage
in terms of channel QFI for  phase shift Hamiltonians. Nevertheless
the question of the order between $\pnorm{\mathscr{H}}{sn}$ and
$\pnorm{\mathscr{H}_{\mathrm{ext}}}{sn}$ is still open
for arbitrary Hamiltonians $H(\theta)$, {where
 $\mathscr{H}_{\mathrm{ext}}=\ii U_{H_{\mathrm{ext}}}{}^\dagger \dot{U}_{H_{\mathrm{ext}}} $ is the local generator of the extended Hamiltonian with $U_{H_{\mathrm{ext}}}=\e{-\ii t H_{\mathrm{ext}}}$.}
We only have the two inequalities
\begin{align}
\pnorm{\mathscr{H}}{sn}^2 &\leq t^2 \pnorm{\dot{H}(\theta)}{sn}^2 \label{upper_bound_1}\\
\pnorm{\mathscr{H}_{\mathrm{ext}}}{sn}^2 &\leq t^2 \pnorm{\dot{H}(\theta)}{sn}^2 \label{upper_bound_2}  \,,
\end{align}
where the second one follows from the fact that $H_\text{1}$ is
independent of $\theta$. 
The interesting question here is thus whether for 
a given $H(\theta)$ that does not saturate \eqref{upper_bound_1} 
we can saturate the bound \eqref{upper_bound_2} by tuning the interaction
Hamiltonian  in $H_{\mathrm{ext}}$, and therefore increase the
sensitivity, \emph{i.e.}~have
$\pnorm{\mathscr{H}_{\mathrm{ext}}}{sn}^2>\pnorm{\mathscr{H}}{sn}^2$. 
To answer this question, we first look at the specific case of {what
  we call} ``broken phase-shift'' 
before treating the general case. 

\subsection{Restoring a broken phase shift}\label{sec:restoring}

Let us consider the Hamiltonian $K(\theta) = \theta G + F$,
along with the corresponding unitary operator $U_{K(\theta) }=\e{-\ii K(\theta)t}$
and the corresponding local generator $\mathscr{K}=\ii
U_{K(\theta)}{}^\dagger\dot{U}_{K(\theta)}$. For $K(\theta)$ we have
from eq.\eqref{eq:main_ineq}
$\pnorm{\mathscr{K}}{sn}^2 \le t^2 \pnorm{G}{sn}^2$.
We assume that the conditions for equality in eq.\eqref{eq:main_ineq}
are not fulfilled and we therefore have $\pnorm{\mathscr{K}}{sn}^2 <
t^2\pnorm{G}{sn}^2$. Our goal is to design a Hamiltonian extension $K_{\mathrm{ext}}(\theta)$  for $K(\theta)$ such that the channel QFI  for  $K_{\mathrm{ext}}(\theta)$ 
saturates inequality \eqref{eq:main_ineq}, 
i.e.
\begin{equation}
\cqfi{K_{\mathrm{ext}}(\theta)}{\theta} =t^2 \pnorm{G}{sn}^2
\;.\label{16} 
\end{equation}
In order to saturate the bound two solutions appear directly.
 Loosely
speaking, we can either  cancel the part that spoils the Hamiltonian,
$ F$, or increase the useful part of  the Hamiltonian, $\theta
G$. Both of them correspond to a Hamiltonian extension but without
ancillas, just adding an extra part to the Hamiltonian. In the first
case the extra Hamiltonian is $ -  F$, and the {corresponding extended
  Hamiltonian} becomes 
 the phase shift $\theta G${, which saturates the bound 
\eqref{16}}. 
While this method may appear artificial,
it clearly demonstrates 
the possibility of enhancing the sensitivity in the case where
the parameter is not coded in a simple phase shift. In fact, it is
not even necessary to add the full $-F$. Assuming for simplicity
non-degenerate extremal eigenvalues of {$\dot{K}(\theta)$}, 
 we know from
Lemma \ref{sat} that only the 
extremal eigenvectors of {
$\dot{K}(\theta)$} have to be eigenvectors of
$K(\theta)$. With this one shows easily that a corrected
$\tilde{K}(\theta)\equiv K(\theta)+R$ {saturates the upper bound} 
if and only if 
$\bra{m} R \ket{n}=-\bra{m} F \ket{n}$ for $n=1$ with $m \in\{2,\cdots
,d\}$ and for 
$n=d$ with $m\in\{1,\cdots, d-1\}$}. {\em I.e.}~loosely speaking, $F$ needs
to be substracted only in 
the subspace of the extremal eigenvectors.

The second strategy is to
add a Hamiltonian $\beta G$. The extended Hamiltonian in this case
reads $K_{\mathrm{ext}}(\theta,\beta) =( \theta +\beta)G +  F$, 
 \emph{i.e.} we have
$K_{\mathrm{ext}}(\theta,\beta)=K(\theta+\beta)$. 
Hence, for large $\beta$, the eigenvectors of $K_\text{ext}$ become
those of $G$, {\em i.e.}~$K_\text{ext}$ and $\dot{K}_\text{ext}$ have then
the same extremal eigenvectors and the bound {\eqref{16}} 
 can be saturated. 
Importantly, this transformation does not
correspond to just
a re-parametrization: In a re-parametrization we keep
the same probability distribution, and we only change what we consider
to be the parameter of the probability distribution. In the present
case 
we change the probability distribution, but
we keep the original parameter, since it corresponds to the physical quantity
in which we are interested.  

Let us formalize this in terms of Fisher Information (FI). 
We consider the distribution $\mu_\theta(x)$. The FI of this distribution for the parameter $\theta$ at the point $\theta_0$ is  $\cfi{\mu_{\theta}(x)}{\theta}\vert_{\theta=\theta_0}$. If we now consider a new parameter $g \equiv g(\theta)$, then the FI for the same probability distribution at $g_0\equiv g(\theta_0)$  is given by
 \begin{equation}
 \cfi{\mu_{\theta}(x)}{g}\vert_{g=g_0}=\frac{\cfi{\mu_{\theta}(x)}{\theta}\vert_{\theta=\theta_0}}{\left( \frp{g(\theta)}{\theta} \vert_{\theta=\theta_0}\right)^2}\;.
 \end{equation}
In the case of a shift of the parameter, say $g(\theta)=\theta + y$ we
get $\frp{g(\theta)}{\theta}=1$ and the FI 
 is not changed. This
corresponds to the intuitive picture that by just changing what we
consider to be the parameter in the probability distribution we do not
really gain more information.  
 
 Fundamentally different is the change of the probability
 distribution. Consider the new probability distribution
 $\tilde{\mu}_\theta(x) = \mu_{f(\theta)}(x)$. Then the FI of the new
 distribution, still for the same parameter $\theta$,  
 is given by 
\begin{equation} 
 \cfi{\mu_{f(\theta)}(x)}{\theta}\vert_{\theta=\theta_0} = \left(
   \frp{f(\theta)}{\theta} \bigg\vert_{\theta=\theta_0}\right)^2
 \cfi{\mu_{\theta}(x)}{\theta}\vert_{\theta=f(\theta_0)}\;. 
\end{equation}
In the specific case of a shift of the parameter in the probability distribution, $f(\theta,\beta)=\theta+\beta$ we obtain \begin{equation}
 \cfi{\mu_{\theta+\beta}(x)}{\theta}\vert_{\theta=\theta_0} = \cfi{\mu_{\theta}(x)}{\theta}\vert_{\theta=\theta_0 +\beta}\;.
\end{equation}
This means that when we use the new distribution $\tilde{\mu}$, the
resulting  FI 
at $\theta_0$ equals the  FI 
 of the original
distribution $\mu$ at the point $\theta_0+\beta$ (see Figure
\ref{fig0}). 
\begin{figure}
\centering\includegraphics[scale=0.75]{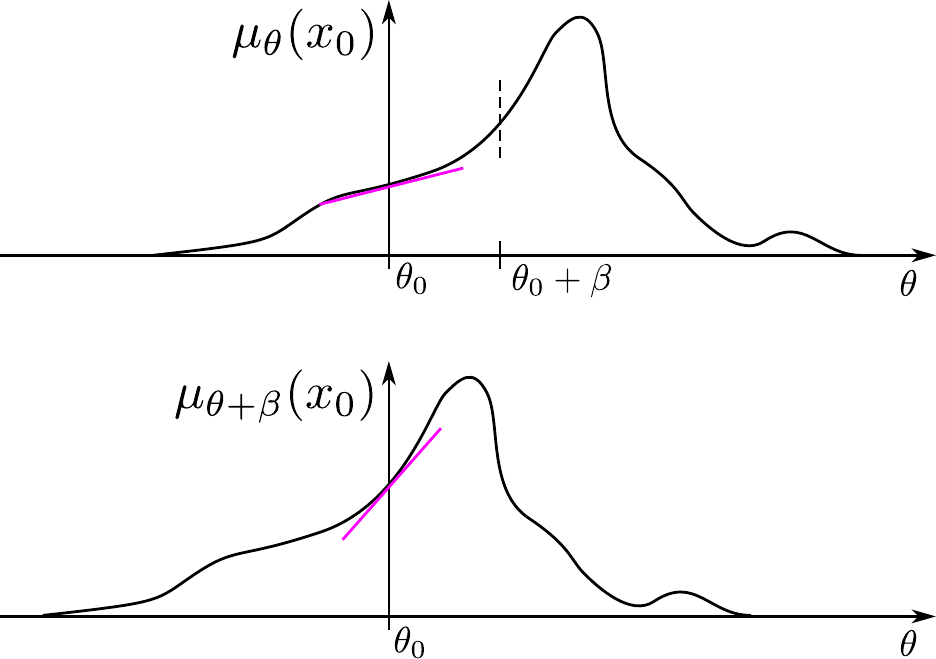}
\caption{Change of probability distribution which amounts to 
a shift of the parameter in the FI. While working at the same value of
the parameter, the second distribution (bottom plot) offers an
increased QFI. }\label{fig0} 
\end{figure}

If we now go back to physics, and if we consider $\beta$
to be a free parameter, then we can tune it to work at the most
favorable position of the Hamiltonian in terms of
$\theta$.
In the case of the broken phase shift, the most favorable values of
$\theta$ are the large values of $\theta$, where the effect of the
term $ F$ becomes negligible. Then, whatever is the original value of
$\theta_0$ we can, by adding the Hamiltonian $\beta G$ and taking
$\beta$ arbitrarily large, approach arbitrarily the upper bound of the QFI.

\subsection{Maximum sensitivity by  "Signal flooding"} 

We have seen that in order to saturate the upper bound of the channel
QFI for the "broken phase shift", $K(\theta) = \theta G +  F$,
we can add a
Hamiltonian proportional to $G $, the generator of the
original phase shift. This generator corresponds to the first
derivative of the total Hamiltonian $G=\dot{K}(\theta) $. We show in
this section that this 
is actually a general result: By adding a term proportional to the
first derivative of the Hamiltonian, we can bring the channel QFI
arbitrarily close to its upper bound. We denote the extended
Hamiltonian obtained in this way  as 
\begin{equation}
H_{\mathrm{fl}}(\theta,\beta)=H(\theta) +\beta \dot{H}(\theta_0) \;.
\end{equation} 
Instead of working directly with the channel QFI we 
first consider
the QFI for an arbitrary pure state. 
Starting with an initial state $\ket{\psi_0}$ the QFI is given by
\begin{equation}
\qfi{U_{H_{\mathrm{fl}}} \ket{\psi_0}}{\theta} =\Vstate{\msc{H}_{\mathrm{fl}}}{\ket{\psi_0}}\;,
\end{equation}
with the local generator $\msc{H}_{\mathrm{fl}}=\ii
U_{\mathrm{fl}}{}^\dagger {\dot{U}_{\mathrm{fl}}}$ 
and the evolution operator  $U_{\mathrm{fl}}=\e{-\ii H_{\mathrm{fl}}(\theta,\beta)}$. 
 The important part in this expression is the derivative of the
 evolution operator. Using eq.\eqref{eq:der_exp} we can write it as   
\begin{align*}
\dot{U}_{\mathrm{fl}}\vert_{\theta=\theta_0}&=\frp{U_{\mathrm{fl}}}{\theta}\bigg\vert_{\theta=\theta_0} \nonumber \\
&= - \ii t \int_{0}^1 \e{- \ii \alpha t H_{\mathrm{fl}}(\theta_0,\beta)}\dot{H}(\theta_0)  \e{ -\ii (1-\alpha)  t H_{\mathrm{fl}}(\theta_0,\beta) }\diff \alpha \;.
\end{align*}
The derivative of the evolution operator with respect to $\beta$ is independent of the value of $\beta$ and equals
\begin{multline}
\frp{U_{\mathrm{fl}}\vert_{\theta=\theta_0}}{\beta}= - \ii t \int_{0}^1 \e{- \ii \alpha t H_{\mathrm{fl}}(\theta_0,\beta)} \\ \times \frp{H_{\mathrm{fl}}(\theta_0,\beta)}{\beta}   \e{ -\ii (1-\alpha)  t H_{\mathrm{fl}}(\theta_0,\beta) }\diff \alpha \;,
\end{multline}

and since $\frp{H_{\mathrm{fl}}(\theta_0,\beta)}{\beta} =\dot{H}(\theta_0)$, we have 
\begin{equation}
\frp{U_{\mathrm{fl}}}{\theta}\bigg\vert_{\theta=\theta_0}=\frp{U_{\mathrm{fl}}\vert_{\theta=\theta_0}}{\beta}\;. 
\end{equation}
 We thus have 
\begin{equation}\label{eq:equality_theta_beta}
\qfi{U_{\mathrm{fl}} \ket{\psi_0}}{\theta} \vert_{\theta=\theta_0}= \qfi{U_{\mathrm{fl}}\ket{\psi_0}}{\beta}\vert_{\theta=\theta_0}\;,
\end{equation} 
 \emph{i.e.}~the QFI for $\theta$ at $\theta_0$ is equal to the QFI for $\beta$ at $\theta=\theta_0$. 
In the limit of large $\beta$ we have {$H_{\mathrm{fl}}(\theta,\beta)\vert_{\beta \gg 1} \simeq
\beta \dot{H}(\theta_0)$}. 
This Hamiltonian is, with respect to $\beta$, a phase shift
Hamiltonian which implies that
$\cqfi{H_{\mathrm{fl}}(\theta,\beta)\vert_{\beta \gg 1}}{\beta} \simeq t^2
\pnorm{\dot{H}(\theta_0)}{sn}^2$. Since eq.\eqref{eq:equality_theta_beta}
is true for 
all states $\ket{\psi_0}$, it is also true for the channel QFI, {\em
  i.e.}~{$\cqfi{H_{\mathrm{fl}}(\theta,\beta)|_{\beta\gg 
  1}}{\theta}=\cqfi{H_{\mathrm{fl}}(\theta,\beta)|_{\beta\gg 1}}{\beta}\simeq t^2
\pnorm{\dot{H}(\theta_0)}{sn}^2$}, 
showing
\emph{in fine} that ``signal flooding'' allows one 
to saturate the upper bound for the channel
QFI, 
\begin{equation}\label{eq:saturating_bound}
\cqfi{H_{\mathrm{fl}}(\theta,\beta)\vert_{\beta \gg 1}}{\theta}\vert_{\theta=\theta_0} \simeq t^2  \pnorm{\dot{H}(\theta_0)}{sn}^2\;.
\end{equation}

The interpretation of this result 
follows from Lemma \ref{sat}: by adding a large term proportional to $
\dot{H}(\theta_0)$ to $H(\theta)$, 
we bring the eigenvectors of 
$H_{\mathrm{fl}}(\theta)$ close to 
those
 of $\dot{H}(\theta_0)$. When the added
part 
dominates completely the Hamiltonian the conditions for equality given
in the Lemma  \ref{sat} are fulfilled, and the bound is saturated.
Of course, for applying this method, in general we 
need to know  the value of 
$\theta$ already in order to be able to add the Hamiltonian $\beta
\dot{H}(\theta_0)$. But the situation is not worse than what one
encounters when optimizing the POVM, which through $L_\theta$ usually
also depends  on $\theta_0$: 
The framework of the QFI and its operational
meaning are local, but
the QFI is still a useful quantity. One typically assumes that we
know already "roughly" the value of $\theta$ and that knowledge can be
used to find a near-optimal POVM.  In the present context, we would
also use this prior knowledge to determine the Hamiltonian $\beta
\dot{H}(\theta_0)$ to be added. { Moreover, for the physically important
case of a broken phase shift, $\dot{H}(\theta)$ is independent of $\theta_0$,
and hence no knowledge at all of $\theta_0$ is required for flooding
  the signal}. 
We emphasize that the Hamiltonian we add does not
depend on $\theta$, but only on $\theta_0$. Adding a Hamiltonian that
depends on $\theta$ may for sure increase the channel QFI to values
actually larger than the upper bound, but requires not only prior
information, but also the need to design a way to add an extra
dependence on the 
parameter.  This would be comparable to  
adding an extra dependence on the parameter through a
$\theta$-dependent POVM \cite{seveso_quantum_2017}.

\subsection{Subtracting the Hamiltonian}

The first method discussed in section \ref{sec:restoring} in the context of a
broken phase shift, namely subtracting the disturbing part
$F=K(\theta_0)- \theta_0 \dot{K}(\theta_0)$ from the
Hamiltonian $K(\theta)$ 
can be generalized 
further: We can
subtract the entire Hamiltonian at $\theta_0$ from the Hamiltonian,
leading to a new Hamiltonian 
\begin{equation}
H_{\mathrm{sub}}(\theta) = H(\theta) - H(\theta_0)\;.
\end{equation} 
This is a valid Hamiltonian extension in the sense that we added a
$\theta$-independent operator to the Hamiltonian while keeping its
parametric derivative, 
$\dot{H}_{\mathrm{sub}}(\theta)=\dot{H}(\theta)$. 
Moreover, at  $\theta=\theta_0$ this Hamiltonian vanishes,
$H_{\mathrm{sub}}(\theta_0) =0$, 
 and 
therefore commutes with any operator, in particular with its own
derivative, $\com{H_{\mathrm{sub}}(\theta_0)
}{\dot{H}_{\mathrm{sub}}(\theta_0) }=0$.  This implies 
that we thus saturate the bound: 
 \begin{equation}
\cqfi{H_{\mathrm{sub}}(\theta) }{\theta} \vert_{\theta=\theta_0}= t^2 \pnorm{\dot{H}(\theta_0)}{sn}^2\;.
\end{equation} 
  In full generality we can subtract $H(\theta_0)- Q$, with
  $\com{Q}{\dot{H}(\theta_0)}=0$ and still saturate the bound. Below
  we show for the example of the measurement of the direction of the
  magnetic field how a locally vanishing Hamiltonian can be realized.





 We now check how stable the method is if one does not subtract exactly
 $\dot{H}(\theta_0)$ but rather $\dot{H}(\theta_0+\ve)$. We define the
 extended Hamiltonian $H_{\mathrm{sub},\ve}(\theta)\equiv H(\theta) -
 H(\theta_0+\ve)$. To second order in $\ve$ we have  
 \begin{equation}
H_{\mathrm{sub},\ve}(\theta) = H(\theta)- H(\theta_0) -\ve \dot{H}(\theta_0)-\ve^2 \ddot{H}(\theta_0)/2 +\mc{O}(\ve^3)\;.
\end{equation} 
To obtain the channel QFI we need the eigenvalues of $\msc{H}_{\mathrm{sub},\ve}$, 
the
local generator  corresponding to $H_{\mathrm{sub},\ve}(\theta)$. 
Using $  e^{X}Ye^{-X}=Y+\left[X,Y\right]+{\frac {1}{2!}}[X,[X,Y]]+\cdots $
in eq.\eqref{eq:der_exp} 
and the fact that $H_{\mathrm{sub},\ve}(\theta_0)
= -\ve \dot{H}(\theta_0)-\ve^2 \ddot{H}(\theta_0)/2 +\mc{O}(\ve^3)$, we obtain, up to second order,
\begin{equation}
\msc{H}_{\mathrm{sub},\ve} = t( \dot{H}(\theta_0) - \ii \frac{\ve^2 t}{2} \Gamma +\mc{O}(\ve^3) ) \;,
\end{equation}
where $\Gamma=[\ddot{H}(\theta_0,\dot{H}(\theta_0)]/2$. 
We can now use 
perturbation theory to see how the eigenvalues of $ \msc{H}_{\mathrm{sub},\ve}$ are
affected by $\ve$.  In its eigenbasis, $\dot{H}(\theta_0)$ is written
$\dot{H}(\theta_0)=\sum_{i=1}^d e_i \ddens{i}$. We denote the
eigenvalues of $\msc{H}_{\mathrm{sub},\ve}$ by  $e_i^{(\ve)}$. Assuming
non-degenerate $e_i^{(\ve)}$, we have to first order perturbation
theory in $\ve^2$ 
\begin{equation}
e_i^{(\ve)}= t e_i-\ii\frac{ \ve^2t^2}{2}\moyvec{\Gamma}{i} +\mc{O}(\ve^3)\;.
\end{equation}
Provided that $\ve$ is small enough, no new degeneracies will
appear. If $e_1$ and $e_d$ are respectively the maximal and minimal
eigenvalue of $\dot{H}(\theta_0)$, then the channel QFI of the
extended channel is given up to second order in $\ve$ by
\begin{multline}
\cqfi{H_{\mathrm{sub},\ve}(\theta)}{\theta}=t^2 \pnorm{\dot{H}(\theta_0)}{sn}^2 \\-\ii \ve^2 t^3(e_1-e_d)(\moyvec{\Gamma}{1}-\moyvec{\Gamma}{d})\;.
\end{multline}
This shows that errors of the order $\epsilon$ in the value of
$\theta_0$ lead to a channel QFI reaching the upper bound up to
a correction of order $\ve^2$. 

   { It was shown in \cite{yuan_optimal_2015} that one can also saturate the upper bound with a different method: by breaking the
   evolution operator $U_{H(\theta)}$ in $m$ evolution operators with
   evolution time $\tau=t/m$, and interspersing controls
   $U_{\mathrm{c}}$ one can increase the channel QFI. An optimal
   choice of controls  leads to the saturation of the upper bound. }
Neither of the three methods 
signal flooding, using additional interspersed controls, and
Hamiltonian subtraction 
makes any use of
ancillas. This shows that in all generality, ancillas are not
\emph{necessary} to achieve the maximal 
sensitivity when estimating a
Hamiltonian parameter. While this result was already known for phase
shift Hamiltonians
\cite{boixo_generalized_2007,fraisse_hamiltonian_2016}, 
these
methods show that it is the case for 
{\em any} Hamiltonian.

\subsection{Engineering the time dependence of the channel QFI}\label{sec:scaling}

Hamiltonian extension can also be used  to modify the behavior of the
channel QFI with time or other relevant resources.
  It was shown in  \cite{pang_quantum_2014,pang_erratum:_2016} that 
in case the eigenvalues of the Hamiltonian do not depend on the
parameter to be estimated,
the QFI behaves periodically in $t$
(this discussion on $t$ applies actually to any 
phase shift parameter). In general for $H(\theta)=\sum_{i=1}^d \lambda_i(\theta) \ddens{\psi_i(\theta)}$ (we assume for simplicity the Hamiltonian non-degenerate) the local generator is given \cite{pang_quantum_2014} by 
  \begin{multline}
  \msc{H}= t \sum_{i=1}^d \frp{ \lambda_i(\theta)}{\theta} \ddens{\psi_i(\theta)}+ 2\sum_{k \neq l} \e{-\ii t \frac{ \lambda_k(\theta)- \lambda_l(\theta)}{2}}\\ \times \sin( \frac{ \lambda_k(\theta)- \lambda_l(\theta)}{2}) \pscal{\psi_l(\theta)}{\partial_\theta \psi_k(\theta)} \dens{\psi_l(\theta)}{\psi_k(\theta)}\;,
  \end{multline}
  clearly showing that if $ \frp{ \lambda_i(\theta)}{\theta}=0$,
  \emph{i.e} if the eigenvalues of $H(\theta)$ are
  $\theta$-independent, only the periodic term survives. The fact that
  the channel QFI (and more generally the QFI) behaves only
  periodically with the time prohibits the use of time as a
  resource: We cannot  increase the sensitivity to arbitrarily large
  values by increasing the evolution time. This is particularly  
harmful since quantum metrology typically provides a quadratic scaling with time
(for time-independent Hamiltonians), to be compared to the linear
scaling obtained by classical averaging. 

To show how Hamiltonian extension can help to engineer the time
dependence we assume that the  eigenvalues of the original Hamiltonian
$H(\theta)$ do not depend on $\theta$: $\lambda_i(\theta) \rightarrow
\lambda_i$. We then consider the Hamiltonian extension  
\begin{equation}
H_V(\theta)=H(\theta)+ \ve V\;.
\end{equation}
where $V$  can be any operator. For small $\ve$ we can use the
perturbation theory to first order to find the perturbed eigenvalues
$\lambda_i^{\ve V}$ of $H_V(\theta)$. Assuming non-degenerate 
$\lambda_i$, we have 
\begin{equation}
\lambda_i^{\ve V}=\lambda_i +\ve \moyvec{V}{\psi_i(\theta)}\;.
\end{equation}
We see that the introduction of $\epsilon V$ in the Hamiltonian makes
the eigenvalue depend on $\theta$,  as long as
$\moyvec{V}{\psi_i(\theta)}$ is not constant as function of $\theta$. 
Under this
condition  this Hamiltonian extension   
introduces a quadratic
scaling  with time in the QFI. Despite the fact that for a general $V$
this Hamiltonian extension does typically not allow one to saturate the upper bound
it offers the advantage that one does not need  to know the exact
value of the parameter to implement it.

{  The method of  engineering the time dependence shows that in general,
not only
$\dot{H}(\theta_0)$ can be used to increase the channel QFI. One can
check case by case if adding another operator helps to increase the
best sensitivity. 
For example it was shown for a broken phase shift
  of the form $K=\theta G + \eta F$ that the channel
  QFI is not always a monotone function of $\eta$  \cite{de_pasquale_quantum_2013}. Thus, for certain values of $\eta$ the channel QFI can be increased 
  by increasing $\eta$, an effect
  that the authors call ``dithering''.}

\section{Example of applications}\label{sec:NV}

\subsection{NV center magnetometry}

 The nitrogen-vacancy defects in diamonds, also known as NV centers,
 correspond to defects in a diamond crystal lattice, where a
 substitutional nitrogen atom comes 
with a vacancy in one of the
 neighbouring sites. Such NV centers exist in three forms, a neutral
 one, a positively charged one, and a  negatively charged one. The
 latter provides a promising system for magnetometry, since it has a
 spin triplet which can be monitored efficiently through optical
 processes, and offers a coherence time that can be as high as a few
 ms (see
 \cite{schirhagl_nitrogen-vacancy_2014,rondin_magnetometry_2014} for
 recent reviews). 
 
Neglecting the interactions with the $^{14}$N nuclear spin as 
well as
the bath of the $^{13}$C nuclear spins, the Hamiltonian $H_{\mathrm{NV}}$  for the triplet
state of the NV center can be written \cite{rondin_magnetometry_2014} 
\begin{equation}\label{eq:ham_NV}
 H_{\mathrm{NV}}= g \mu_\mathrm{B} ( B_x S_x +B_y S_y + B_z
 S_z)/\hbar+ D S_z^2/\hbar + E (S_x^2-S_y^2)/\hbar\;, 
 \end{equation}
 with $g$ the Land\'e factor, $\mu_\mathrm{B}$
the Bohr magneton, $D$ and $E$ the zero field splitting parameters and
$S_x, S_y$ and $S_z$ the dimensionless spin-1 matrices, fulfilling
$\com{S_i}{S_j}=\ii \hbar \ve_{ijk} 
\,S_k \;\forall \;i,j,k \in \lbrace
x,y,z \rbrace$. The zero field splitting 
has two components, the axial one, with parameter $D$  (taken as
$D=2\pi\times 2.87\,$GHz), and the off-axis one, with parameter $E$ (taken as
$E=2\pi\times 5\,$MHz). The parameter that one seeks to estimate is $B_z$, the
magnetic field in the $z$ direction which is the direction of  the
axis from N to V. 
This Hamiltonian has the form of a "broken phase shift"
$H_{\mathrm{NV}}=B_z G +F$ with $G=g \mu_\mathrm{B} S_z/\hbar$. 
The channel QFI is  bounded by $(t/\hbar)^2\pnorm{G}{sn}^2 =(t g
\mu_\mathrm{B} /\hbar )^2 \pnorm{S_z/\hbar}{sn}^2 
= 4(t g \mu_\mathrm{B} /\hbar)^2$. 
 Due to
the part which does not commute with $S_z$, \emph{i.e.}~the magnetic
transverse field and the off-axis zero field splitting, the channel
QFI decreases for small value of $B_z$, while for high values of
$B_z$ the channel QFI reaches the upper bound (see Figure
\ref{fig1}).

As we have seen in section \ref{sec:saturating_bound}, by adding the
derivative $\dot{H}_{\mathrm{NV}}$ of the Hamiltonian to the original
Hamiltonian we can saturate the upper bound.  The shifted Hamiltonian is  
\begin{equation}\label{eq:Ham_NV_ext}
\tilde{H}_{\mathrm{NV}}=H_{\mathrm{NV}} +\beta
\dot{H}_{\mathrm{NV}}=H_{\mathrm{NV}} + \beta g \mu_B S_Z/\hbar\;. 
\end{equation}
We  represent in Figure~\ref{fig1} the effect of $\beta$ on the
channel QFI. We see that by increasing the value of $\beta$ we {can get
arbitrarily close} 
to the upper bound.  
As  pointed out when
discussing the "broken phase shift", the effect of $\beta$ amounts
here to evaluating the QFI at shifted value of the parameter
(at $\theta+\beta$ instead of at $\theta$).
For magnetometry with NV-centers {it is actually already known that
  adding an additional magnetic field 
can help to measure weak magnetic
fields}.  In \cite{rondin_magnetometry_2014} this was discussed in the
the context of reaching a linear Zeeman effect. In addition,  
adding a bias field makes already sense from the perspective of
shifting the precession signal up to higher frequency, where it can be
distinguished from noise more easily.  Here we see that independently
of such specific considerations, ``flooding the signal'' 
by adding the known parameter-derivative of the Hamiltonian with a
large factor is a very general method that allows one to overcome
pernicious effects of other parts of the
Hamiltonian and reach maximal possible sensitivity.

\begin{figure}
\includegraphics[scale=0.25]{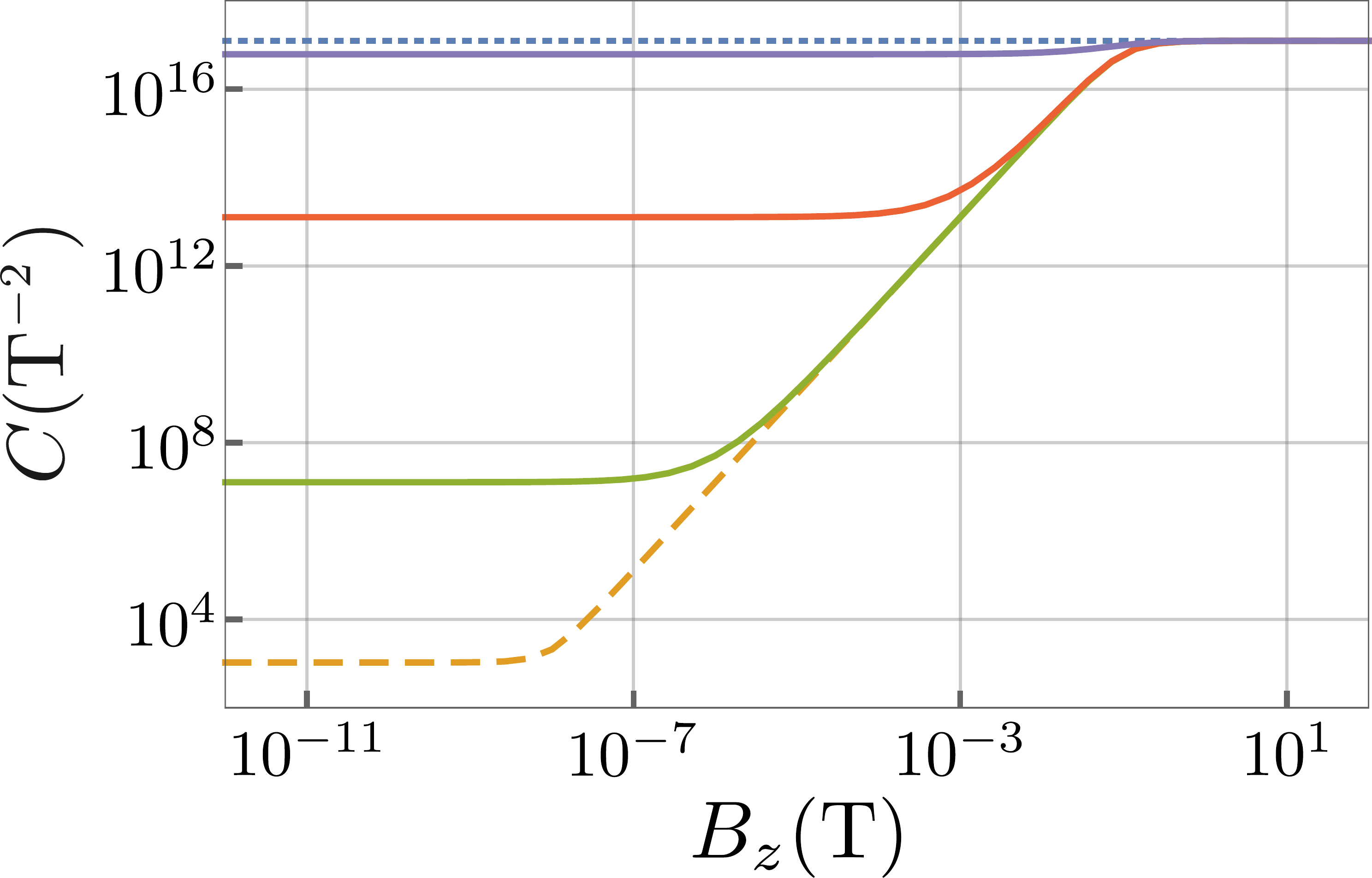}
\caption{Channel QFI for the NV center
  ($t=10^{-3}$s, $B_x=10^{-1}$T and $B_y=0$T), as a function of the
  parameter to be estimated, $B_z$. The dotted line represents the
  upper bound, the dashed line  the channel QFI of the original Hamiltonian \eqref{eq:ham_NV} and the
  continuous line the channel QFI of the extended channel
  \eqref{eq:Ham_NV_ext} for different values of $\beta$: From bottom to top we have $\beta=10^{-6}$, $\beta=10^{-3}$ and
  $\beta=10^{-1}$. }\label{fig1}  
\end{figure}

The behaviour of the channel QFI for very low values of $B_z$ deserves
some more comments. As it can be seen in Figure \ref{fig1} when $B_z$
becomes very small the channel QFI reaches a plateau. This is a quite
general feature for Hamiltonians of the form of a "broken phase shift"
$K(\theta)=\theta G + F$. The channel QFI $\cqfi{K(\theta)}
{\theta}\vert_{\theta=0}$ 
 at $\theta=0$ can be obtained by calculating
 $\msc{K}\vert_{\theta=0}$ which reads 
\begin{equation}\label{eq:local_generator_theta_0}
\msc{K}\vert_{\theta=0}=t\sum_{i}g_{ii}\ddens{i}+\ii \sum_{i\neq j} g_{ij} \frac{1-\e{\ii  t ( f_i-f_j)}}{f_i-f_j} \dens{i}{j}\;,
\end{equation}
with $ F = \sum_i f_i \ddens{i}$, where the $f_i$s and the
$\ket{i}$s are respectively the eigenvalues (assumed non-degenerate
here) 
and eigenvectors of $F$,
and $G=\sum_{i,j} g_{ij} \dens{i}{j}$, with $g_{ij}^*=g_{ji}$. To get
more insight into this expression 
notice that the channel QFI vanishes only when maximal and minimal
eigenvalues of $\msc{K}$ coincide, 
{\em i.e.}~when $\msc{K}$ 
is proportional to the
identity operator. In general this will be the case 
if $t(f_i-f_j) $ is an 
integer and $g_{11}=g_{dd}$. 
This condition 
is not necessary 
since $G$
 can be
sparse and therefore some $g_{ij}$ may already be equal to zero. Still,
this simple analysis shows that particular cases excepted,
it is a
quite general feature that the channel QFI for a broken phase shift
does not vanish for 
small values of the parameter. 

\subsection{Estimation of a direction of a magnetic field}

The estimation of a component 
of a magnetic field using a NV center studied in
the previous section 
leads to a broken phase shift. We now consider 
the estimation of one of the spherical angles characterizing the direction of a
magnetic field with a free spin-1 as a probe, a situation
that does not correspond to a broken 
phase shift. The Hamiltonian
is given by 
\begin{equation}\label{eq:Ham_direction}
H(B,\theta,\vp) =g \mu_{\mathrm{B}} \mathbf{B}\cdot
\mathbf{S}/\hbar\;,
 \end{equation}
 with $ \mathbf{B}= B(\sin(\theta) \cos(\vp),\sin(\theta) \sin(\vp),
 \cos(\theta)  )$ 
and $\mathbf{S}=(S_x,S_y,S_z)$. 
  We want to estimate the parameter $\theta$ in a scalar parameter
  setting (we consider $B$ and $\vp$ as known). The channel QFI for
  the corresponding channel is bounded as
 \begin{equation}
 \cqfi{H(B,\theta,\vp)}{\theta}\leq \left(t / \hbar\right)^2
 \pnorm{\frp{H(B,\theta,\vp)}{\theta}}{sn}^2\;. \label{39}
 \end{equation}
The eigenvalues of $H(B,\theta,\vp) $ are $0$ and $\pm g
\mu_{\mathrm{B}}  B$, from 
which we get  $\pnorm{\frp{H(B,\theta,\vp)}{\theta}}{sn}^2 = 4 (g
\mu_{\mathrm{B}}  B)^2$. The local generator  $\msc{H}_\theta$ of the
translation in $\theta$ can be computed exactly and its eigenvalues
are $0$ and $\pm 2 \sin(g \mu_{\mathrm{B}}Bt/(2\hbar))$. This gives a
channel QFI equal to  
\begin{equation}
 \cqfi{H(B,\theta,\vp)}{\theta}=16 \sin^2( g \mu_{\mathrm{B}}Bt/(2\hbar)) \;.
\end{equation}
We see that the channel QFI has a periodic time dependence since
the
eigenvalues of $H(B,\theta,\vp) $ are $\theta$-independent (see
dashed line in Figure \ref{fig:direction_plot}). However, the upper
bound still scales quadratically with time. As a result, for large
time, the discrepancy between the actual channel QFI and its upper
bound increases. Notice that for this Hamiltonian the role of 
time $t$ in the channel QFI is the same as the role of the strength $B$ of
the magnetic field. 

\subsubsection{Signal flooding}

We now examine
how "signal flooding" helps to increase the channel QFI. We consider the extended Hamiltonian
\begin{equation}\label{eq:Ham_sig_flod_direction}
H_{\mathrm{fl}}(B,\theta,\vp) = H(B,\theta,\vp) +\beta \frp{H(B,\theta,\vp)}{\theta}\bigg\vert_{\theta=\theta_0}\;.
\end{equation}
We represent in the Figure \ref{fig:direction_plot} the effect of
signal flooding by plotting the channel QFI of
$H_{\mathrm{fl}}(B,\theta,\vp)$ for different values of $\beta$. In the same
way as for the NV center Hamiltonian we see that increasing the value
of $\beta$ allows one to get arbitrarily close to the upper bound. 
In contrast to
the estimation of a broken phase shift (\emph{e.g.}~estimation of $B_z$ for the NV center),  here signal flooding does not
correspond to a
shift of the value of $\theta$.

\begin{figure}
\centering\includegraphics[scale=0.25]{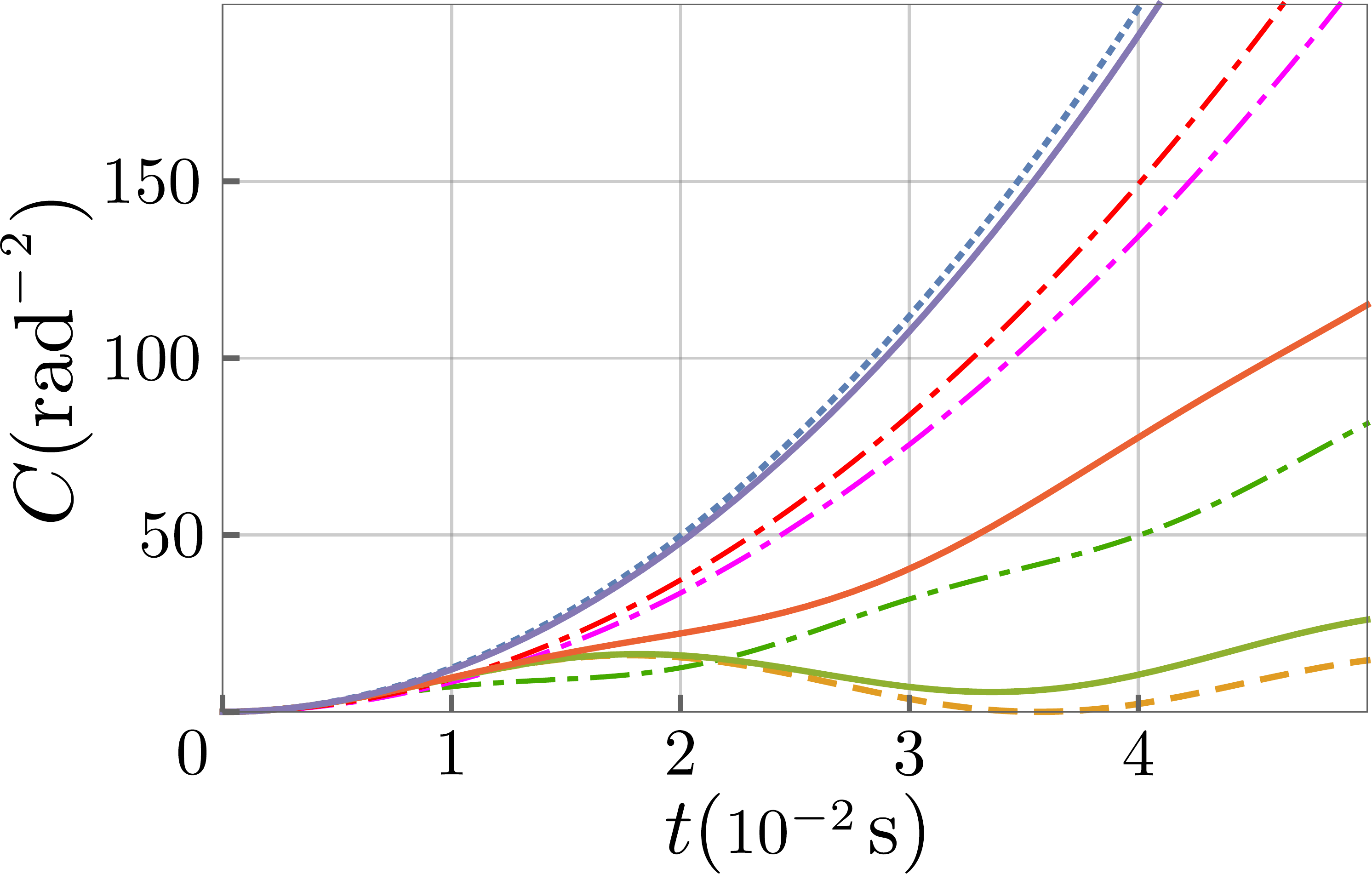}
\caption{Channel QFI for a direction of the magnetic field ($B=10^{-9}$T, $\vp=\pi /4$ and $\theta=\pi /3$), as a function of the time. The dotted line represents the
  upper bound, the dashed line  the channel QFI of the original Hamiltonian \eqref{eq:Ham_direction}. The
  continuous 
line represents the channel QFI for the "signal flooding" Hamiltonian \eqref{eq:Ham_sig_flod_direction} for different values of $\beta$: From bottom to top we have $\beta=0.2$, $\beta=0.75$ and
  $\beta=5$. The dotted-dashed line represents the channel QFI for the extended Hamiltonian \eqref{eq:Ham_quad}, for different values of $\kappa$: From bottom to top we have $\kappa=1$, $\kappa=10$ and $\kappa=10^9$. }\label{fig:direction_plot}  
\end{figure}

\subsubsection{Time engineering}

Having eigenvalues independent of $\theta$, $H(B,\theta,\vp)$ leads to a channel QFI with a  periodic time scaling. In line with section \ref{sec:scaling} we now show how we can restore the quadratic time scaling using a Hamiltonian extension. We consider the extended Hamiltonian
\begin{equation}\label{eq:Ham_quad}
H_{S_z} (B,\theta,\vp)=H(B,\theta,\vp) +\kappa B g \mu_{\mathrm{B}}
S_z/\hbar\;,
\end{equation}
which correspond to the original Hamiltonian with an additional
magnetic field in the $z$ direction with a strength $\kappa B$.

The eigenvalues of $H_{S_z} (B,\theta,\vp)$ are $0$ and $\pm g
\mu_{\mathrm{B}} \sqrt{B^2+\kappa^2 +2B \kappa \cos(\theta)}$, showing
that the additional magnetic field makes the eigenvalues depend 
 on $\theta$, and therefore allow one 
to create a quadratic time scaling. We have represented in Figure
\ref{fig:direction_plot} the channel QFI of the extended Hamiltonian
for different values of $\kappa$. This clearly shows that the
additional magnetic field introduces a quadratic scaling, but also
that the larger $\kappa$, \emph{i.e.}~the strength of the
additional field, the larger is the channel QFI. In contrast
to signal flooding, for very large values of $\kappa$ we do not
reach the upper bound (compare the case $\beta = 5$ and $\kappa=
10^9$): Indeed $H_{S_z} (B,\theta,\vp)$ do not fulfil the condition
of Lemma \ref{sat} and when it dominates completely the Hamiltonian we
do still not reach the upper bound.

\subsubsection{Hamiltonian subtraction}

\begin{figure}
\centering\includegraphics[scale=0.25]{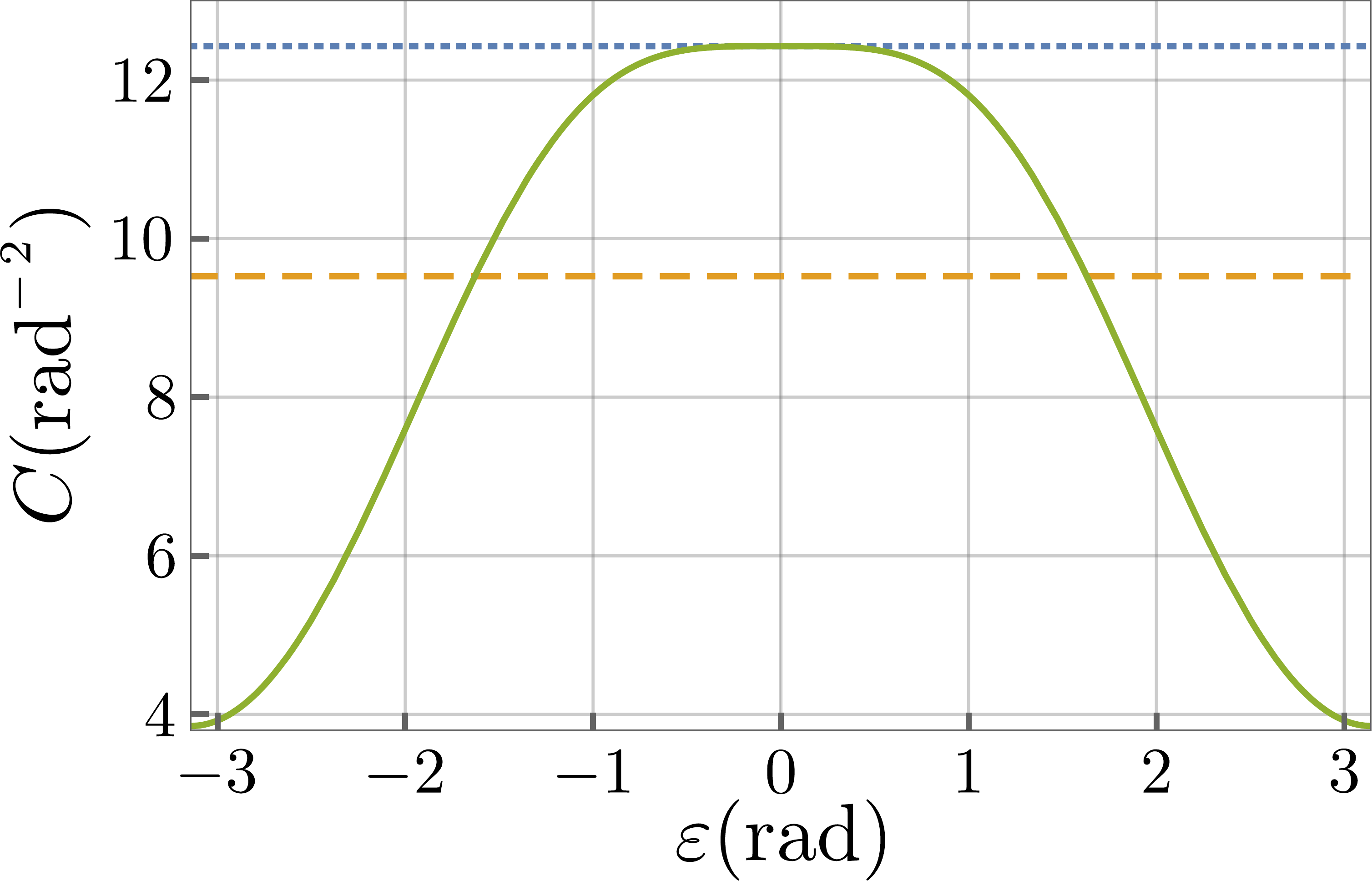}
\caption{Effect of a perturbation in "Hamiltonian subtraction" for the
  estimation of a direction of a magnetic field.  ($B=10^{-9}$T,
  $\vp=\pi /4$, $\theta_0=\pi /3$ and  $t=10^{-2}$s). The dotted line
  represent the upper bound and the dashed line the channel QFI of the
  original Hamiltonian \eqref{eq:Ham_direction}. The plain line
  represent the channel QFI of the Hamiltonian
  \eqref{eq:Ham_sub_magnetic}.}.\label{fig:sub_ham}   
\end{figure}

Finally, we show how we saturate the upper bound using Hamiltonian
subtraction. Hamiltonian subtraction here amounts to adding a magnetic
field of the same strength but 
in opposite direction to the original magnetic field. To
see the effect of a 
slight deviation from the correct direction of the added field, 
we study
the perturbed Hamiltonian 
\begin{equation}\label{eq:Ham_sub_magnetic}
H_{\mathrm{sub},\ve}(B,\theta,\phi)=H(B,\theta,\phi)-H(B,\theta_0+\ve,\phi)\;.
\end{equation}
We can 
calculate the channel QFI exactly,
\begin{multline}\label{eq:cQFI_pert}
\cqfi{H_{\mathrm{sub},\ve}(B,\theta,\phi)}{\theta}\vert_{\theta=\theta_0}=4(g  \mu_\mathrm{B}
B \frac{t}{\hbar} )^2 \cos^2(\frac{\ve}{2} )\\ + 4 \sin^2(g
\mu_\mathrm{B} B \frac{t}{\hbar}  \sin(\frac{\ve}{2}))\;. 
\end{multline}
 One verifies
 that for values of $\ve=0$ this 
saturates the upper bound \eqref{39}. 
It 
 is interesting to notice that the correction of second order in $\ve$
 in the channel QFI vanishes exactly, and the leading order corrections
 are of order $\ve^4$, demonstrating the stability of the method to
 perturbation in the direction of the subtracted magnetic field, as
 represented in Figure \ref{fig:sub_ham}. 
\section{Conclusion}  \label{ccl}

We investigated the problem of single Hamiltonian parameter
estimation, and the effect of Hamiltonian extension on the precision
with which one can estimate the parameter. It was already known that
in the case of a phase shift, Hamiltonian extension does not lead to
an increase of the channel QFI (QFI optimized over all input
states) \cite{fraisse_hamiltonian_2016,boixo_generalized_2007}. 
But for more general Hamiltonians, Hamiltonian extension 
may increase the channel QFI.  In particular, we found two ways of
engineering the Hamiltonian to saturate the upper bound of the 
channel QFI: ($i$)"Signal flooding"  which consists of adding to the
original Hamiltonian a large term proportional to its 
derivative; and $(ii)$ "Hamiltonian subtraction" which consists in
subtracting the Hamiltonian at a fixed value of the parameter from the
original Hamiltonian.

Neither method makes use of any
ancillas, showing that adding subsystems is in general not necessary
for unitary parameter estimation to achieve the best precision.  
We applied "signal flooding" to the Hamiltonian of an NV
center. Such systems are used to measure small magnetic fields. We
showed how adding a 
magnetic field in the $z$-direction 
helps to increase the maximal possible sensitivity of the
measurement of
the magnetic field's component in the same
direction. 
  We also illustrated  both methods with the estimation of a direction
  of a  magnetic field.

Finally we showed that in cases where the eigenvalues of the original
Hamiltonian are parameter-independent, {adding almost any constant
Hamiltonian} can lead to a quadratic increase of the channel QFI with
time, whereas for the original Hamiltonian it is bounded and periodic
in time. Such is the
situation for the measurement of the polar angle of the magnetic
field, probed with a free spin. This will typically not enable
saturation of the bound of the channel QFI, but can nevertheless be
very advantageous for large measurement times, and relatively straight-forward to
implement.

The scenario considered in this work is an ideal one. 
Our figure of merit, the channel QFI, constitutes a valid
and achievable benchmark for quantum parameter estimation 
that is difficult to reach
in practice. 
Even if the theory gives us the optimal POVM and
the optimal state, they may be  hard to implement.
Therefore, it may be interesting to see to what extend 
Hamiltonian extensions
also work in 
situations where the optimal POVM or the
optimal state cannot be implemented.  
\FloatBarrier
{\bf Acknowledgment} We thank Fabienne Schneiter for  useful discussions.

\bibliography{biblio_ham_ext}

\end{document}